 \documentclass[11pt]{amsart}

\usepackage{amsrefs}

\usepackage{epsfig}  		
\usepackage{epic,eepic}       
\usepackage{graphicx}

\usepackage{enumerate}
\usepackage{mathbbol}
\usepackage{amssymb}
\usepackage{braket}
\usepackage{longtable}
\usepackage[colorlinks]{hyperref}
\usepackage{doi}
\usepackage{caption}
\usepackage{mathtools}
\usepackage{microtype}
\usepackage{lineno}

\DeclareSymbolFontAlphabet{\mathbb}{AMSb}
\DeclareSymbolFontAlphabet{\mathbbl}{bbold}

\newtheorem{lemma}{Lemma}[section]
\newtheorem*{lemma*}{Lemma}
\newtheorem{theorem}[lemma]{Theorem}
\newtheorem*{theorem*}{Theorem}

\newtheorem{proposition}[lemma]{Proposition}

\newtheorem*{proposition*}{Proposition}
\newtheorem{fact}[lemma]{Fact}
\newtheorem*{fact*}{Fact}

\newtheorem*{notation*}{Notation}
\newtheorem*{conventions*}{Conventions}

\newtheorem*{remark*}{Remark}

\newtheorem*{corollary*}{Corollary}

\newtheorem*{conjecture*}{Conjecture}

\newtheorem*{problem*}{Problem}

\newtheorem*{question*}{Question}

\newtheorem{assumption*}{Assumption}

\theoremstyle{definition}

\newtheorem*{example*}{Example}
\newtheorem{definition}[lemma]{Definition}
\newtheorem*{definition*}{Definition}

\theoremstyle{remark}

\newtheorem*{claim*}{Claim}

\newtheorem*{case*}{Case}
\newtheorem*{construction*}{Construction}

\newtheorem*{exercise*}{Exercise}

\numberwithin{equation}{section}

\newcommand{\N}{\mathbb{N}}
\newcommand{\Z}{\mathbb{Z}}

\newcommand{\bs}{\backslash}

\newcommand{\proves}{\vdash}

\newcommand\CC{{\mathcal C}}
\newcommand\DD{{\mathcal D}}

\newcommand\FF{{\mathcal F}}
\newcommand\GG{{\mathcal G}}
\newcommand\HH{{\mathcal H}}

\newcommand\LL{{\mathcal L}}

\newcommand\<{\langle}
\renewcommand\>{\rangle}

\def\Ind#1#2{#1\setbox0=\hbox{$#1x$}\kern\wd0\hbox to 0pt{\hss$#1\mid$\hss}
\lower.9\ht0\hbox to 0pt{\hss$#1\smile$\hss}\kern\wd0}

\def\notind#1#2{#1\setbox0=\hbox{$#1x$}\kern\wd0
\hbox to 0pt{\mathchardef\nn=12854\hss$#1\nn$\kern1.4\wd0\hss}
\hbox to 0pt{\hss$#1\mid$\hss}\lower.9\ht0 \hbox to 0pt{\hss$#1\smile$\hss}\kern\wd0}

\def\includeE#1{{\lhook\kern-3.5pt\joinrel\smash{
    \mathop{\longrightarrow}\limits^{#1}}}}

\def\efor/{Example~\ref{E4}}

\def\BL/{Baldwin--Lachlan}
\def\Bu/{Buechler}
\def\Hr/{Hrushovski}
\def\lm/{locally modular}
\def\wm/{weakly minimal}
\def\nm/{non--modular}
\def\ss/{superstable}
\def\ud/{unidimensional}
\def\sm/{strongly minimal}

\def\hbar{\bar{h}}

\def\xbar{\bar{x}}

\def\tr/{trivial}
\def\nt/{non--trivial}
\def\st/{strong type}

\def \F{{\mathcal F}}
\def\Z{{\mathbb Z}}

\def\Fa0{{\FF^a_{\aleph_0}}}

\def\<{\langle}
\def\>{\rangle}

\title{Model checking in finite fields and finite groups}
\author{Samuel Braunfeld}
\thanks{Supported by Project 24-12591M of the Czech Science Foundation (GA\v{C}R), and supported partly by the long-term strategic development financing of the Institute of Computer Science (RVO: 67985807).
}

\renewcommand{\F}{\mathbb{F}}

\begin{document}
	\begin{abstract}
		We prove the following results.
		
		\begin{enumerate}
			\item First order model checking is fixed-parameter tractable on the class of finite fields, as a corollary of results of Ax on the theory of (pseudo)finite fields.
			\item Every hereditary graph class first order definable in the class of finite groups is monadically stable, and thus has fixed-parameter tractable first order model checking.
			\item Monadic second order model checking is not slicewise polynomial on the class of cyclic groups of prime-power order, unless E = NE.
		\end{enumerate}
	\end{abstract}
\maketitle

\section{Introduction}

There has been a long line of results concerning the fixed-parameter tractability of model checking for first order logic or monadic second order logic (MSO) in hereditary graph classes. This is the algorithmic problem where one is given a structure $M$ and a sentence $\phi$ of the logic, and determines if $M \models \phi$. It is \emph{fixed-parameter tractable} on a class $\CC$ of structures if there is an algorithm that runs in time $f(|\phi|)|M|^c$ for some computable function $f$ and some constant $c$ (that is independent of $\phi$), for every $M \in \CC$. This work has brought together structural graph theory, model theory, automata theory, and computer science. In the MSO setting, it includes Courcelle's theorem showing tractability for classes of bounded treewidth \cite{courcelle1990monadic}, and its extension to classes of bounded cliquewidth \cite{courcelle2000linear}. In the first order setting, it includes Seese's theorem showing tractability for classes of bounded degree \cite{seese1996linear}, several results for sparse graph classes culminating in tractability for nowhere dense classes \cite{grohe2017deciding}, and ongoing work in generalized sparsity that notably incorporates ideas from model-theoretic (neo)stability theory \cite{siebertz2024advances}. 

In a survey of this work twenty years ago, Grohe raised the question of the fixed-parameter tractability of model checking for various classes of finite algebraic structures \cite[Open Problem 8.2]{grohe2007logic}. The only work towards this problem we are aware of are \cite{abu2016algorithmic} and \cite{bova2015first}. In \cite{abu2016algorithmic}, first order tractability is shown for the the classes of finite boolean algebras and finite abelian groups by interpreting them in structures that admit automata-theoretic analysis. In \cite{bova2015first}, first order tractability is shown for abelian groups by using the fact that modules admit quantifier elimination down to existential formulas, and MSO intractability (in particular, W[1]-hardness) is shown for a subclass of abelian groups presented as inputs doubly logarithmic in the size of the group. 

In this work, we prove first order tractability for the class of finite fields.

\begin{theorem}[Theorem \ref{thm:fomod}]
	The class of finite fields has fixed-parameter tractable first order model checking. Furthermore, it can be done in time $f(|\phi|)k^3$ for some computable function $f$ if the field is given just as a binary string of length $k$ specifying the field's cardinality.
\end{theorem}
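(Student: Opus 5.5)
The plan is to reduce truth of an arbitrary sentence in all but finitely many finite fields to a Boolean combination of one-variable root conditions, using Ax's results on (pseudo)finite fields, and then to evaluate those conditions by fast polynomial arithmetic over the prime field.

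\textbf{Step 1 (model-theoretic reduction).} I would invoke the following consequence of Ax's work. First, the theory of all finite fields is decidable. Second, two pseudofinite fields are elementarily equivalent iff they have the same characteristic and the same ``absolute numbers''. The second condition can be rephrased as agreement on all sentences of the form $\exists x\, f(x)=0$ with $f \in \Z[x]$.

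Every infinite ultraproduct of finite fields is pseudofinite. Hence a compactness argument gives, for each sentence $\phi$, the following. There are an $N_\phi$ and a Boolean combination $\psi_\phi$ of sentences $\exists x\, f_i(x)=0$, with $f_i\in\Z[x]$, such that every finite field $\F_q$ with $q>N_\phi$ satisfies $\phi\leftrightarrow\psi_\phi$.

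Characteristic also has to be absorbed into $\psi_\phi$. Sentences $p=0$ are themselves of the allowed form, via the constant polynomial $f=p$. On ultraproducts with unbounded characteristic the characteristic plays no role beyond what the absolute numbers already see. So a finite Boolean combination suffices.

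Because the theory of finite fields is decidable, the pair $(N_\phi,\psi_\phi)$ is computable from $\phi$. We enumerate all candidate pairs $(N,\psi)$ and, for each, decide whether the sentence ``$|F|>N \to (\phi\leftrightarrow\psi)$'' holds in all finite fields. Here ``$|F|>N$'' is first-order expressible. This search terminates by the compactness argument, and it defines the function $f(|\phi|)$.

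\textbf{Step 2 (algorithm).} The input is $q$ in binary, with $k$ bits.
\begin{enumerate}
\item If $q\le N_\phi$, build $\F_q$ explicitly and evaluate $\phi$ by brute force in time bounded by a function of $\phi$ alone.
\item Otherwise, recover $p$ and $n$ with $q=p^n$. Take the largest $n\le k$ for which $q$ is a perfect $n$-th power, and let $p=q^{1/n}$. This needs no primality test, since $q$ is promised to be a prime power; perfect-power detection is well within $O(k^3)$.
\item For each $f=f_i$, reduce modulo $p$. If the reduction is the zero polynomial, $f$ has a root. Otherwise, $f$ has a root in $\F_{p^n}$ iff $\bar f$ has an irreducible factor over $\F_p$ whose degree $d$ divides $n$.
\item Detect such factors by distinct-degree factorization: for $d\le\deg f$, compute $\gcd\bigl(\bar f,\;x^{p^d}-x \bmod \bar f\bigr)$, peeling off factors degree by degree.
\item Evaluate the Boolean combination $\psi_\phi$ from these answers.
\end{enumerate}
The point of step 3 is that we never construct $\F_q$ itself.

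\textbf{Step 3 (running time).} Computing $x^{p^d}\bmod \bar f$ by repeated squaring takes $O(d\log p)$ multiplications of polynomials of bounded degree over $\F_p$. Each such multiplication costs $O(1)$ multiplications of $O(k)$-bit integers, that is, $O(k^2)$. This gives $O(k^3)$ per polynomial, times a factor depending only on $\phi$. If the field is instead given as a full structure of size $|M|=q$, the input already has size at least $q\ge 2^{k-1}$, so fixed-parameter tractability follows a fortiori.

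\textbf{Main obstacle.} I expect the hardest part to be Step 1: getting a \emph{uniform} statement that covers all large finite fields at once, including bounded characteristic with growing degree, together with the computability of $N_\phi$ and $\psi_\phi$. I would argue it by contradiction. Suppose no finite Boolean combination works. Then there is a sequence of finite fields of size tending to infinity that agree on more and more of the root sentences $\exists x\, f(x)=0$ (and on the characteristic) yet disagree on $\phi$. Nonprincipal ultraproducts of the two resulting families are pseudofinite, have the same characteristic and the same absolute numbers, and disagree on $\phi$. This contradicts Ax's elementary-equivalence criterion.
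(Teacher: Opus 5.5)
Your proposal is correct, and its architecture matches the paper's. You reduce $\phi$, on all sufficiently large finite fields, to a Boolean combination of one-variable existential sentences, brute-force the finitely many small fields, and then run polynomial root tests costing $O_\phi(k^3)$. Where you differ is in how each half is supplied.

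\textbf{The reduction.} The paper simply cites Ax's Theorem 12, which delivers $\psi$ and $N_\phi$ computably. You instead rebuild that statement from two other results of Ax: the elementary-equivalence criterion for pseudofinite fields (same absolute numbers) and the decidability of the theory of all finite fields. You get existence from an ultraproduct argument and computability from a search driven by the decision procedure. This is more self-contained. It also gives a slightly sharper normal form: only sentences $\exists x\, f(x)=0$ with $f\in\Z[x]$ appear, with the characteristic entering through constant $f=p$. So your algorithm really does reduce to root-existence tests. The paper's reducibility sentences allow arbitrary quantifier-free matrices, and for those, checking whether each polynomial has a root is not quite enough.

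\textbf{The algorithm.} The paper counts $O(d^3\log n)$ operations in $\F_n$ at $O(\log^2 n)$ each. It leaves implicit how $\F_n$ is represented when only $n$ is given in binary. You work entirely in $\F_p[x]$ after recovering $p$ from $q$, so you never need an irreducible polynomial of degree $n$ over $\F_p$. Deterministic polynomial-time construction of such polynomials is not known unconditionally, so your treatment of the binary-input bound is cleaner.

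\textbf{Minor remarks.}
\begin{itemize}
\item The parenthetical ``and on the characteristic'' in your last paragraph is automatic, since $p=0$ is itself one of your root sentences.
\item The claim that perfect-power detection fits in $O(k^3)$ is true but should be backed by a concrete method, e.g.\ Newton iteration or Bernstein's essentially linear-time algorithm. A naive binary search over every exponent $m\le k$ with schoolbook arithmetic loses a logarithmic factor.
\item For pure detection you do not need the degree-by-degree peeling, which strictly speaking presupposes squarefreeness. It suffices to test whether $\gcd(\bar f,\, x^{q}-x \bmod \bar f)$ is nonconstant in $\F_p[x]$.
\end{itemize}
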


This is obtained as an immediate corollary of a result of Ax \cite{ax1968elementary} showing that finite fields admit a sort of computable asymptotic quantifier elimination down to boolean combinations of existentials, similar to the approach of \cite{bova2015first} for abelian groups. 

Under a standard complexity assumption, we also prove MSO intractability for the class of finite cyclic groups of prime power order.

\begin{theorem}[Theorem \ref{thm:msocyc}]
		Assume E $\neq$ NE. Then, on the class of cyclic groups of prime-power order, there is an MSO sentence $\phi$ such that checking whether $G \models \phi$ cannot be done in time polynomial in $|G|$, and so MSO model checking is not in XP.
\end{theorem}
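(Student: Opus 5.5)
Since E $\neq$ NE, there is a unary (tally) language $L \subseteq \{1\}^*$ in NP $\setminus$ P. This is the standard equivalence of Hartmanis, Immerman and Sewelson: E $=$ NE iff every sparse/tally NP language is in P. The plan is to encode the question ``is $1^n \in L$?'' into MSO model checking on a cyclic group of size polynomial in $n$. A polynomial-time-in-$|G|$ algorithm for a fixed MSO sentence would then put $L$ in P.

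\textbf{Step 1: unary encoding.} Choose groups $G_n = \Z/p^{a}$ whose sizes track $n$ polynomially. The cleanest option is $G_n = \Z/2^{m}$ with $m$ computable from $n$ and $2^m$ polynomial in $n$, say $2^{m-1} \le n < 2^m$ or $m = \lceil \log_2 n \rceil$ suitably adjusted. Alternatively one could use primes $p$ with $p$ of size polynomial in $n$ and the group $\Z/p$. However, the MSO sentence must be able to recover $n$ itself, not just $m$, so prime-power orders alone do not determine $n$. I would therefore encode $n$ inside the set structure.

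The more robust design is a sentence of the form $\exists X\,\psi(X)$. Here $X$ is intended to be an initial segment $\{0,g,2g,\dots,(n-1)g\}$ of length $n$ along a generator $g$, and $\psi$ asserts that an accepting computation of an NTM for $L$ on $1^{|X|}$ exists. So the input length is guessed by the set $X$, and the group is only a large enough ``canvas''. To avoid guessing $n$ we instead fix $n$ via the group order. We take the groups $\Z/p$ with $p$ prime, and the language $L' = \{1^p : p \text{ prime},\ 1^p \in L\}$ padded appropriately. Alternatively we reduce $L$ to a tally language supported on primes by mapping $1^n \mapsto 1^{p_n}$, with $p_n$ the least prime $\ge$ some polynomial in $n$. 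This map is computable in poly time and preserves NP $\setminus$ P membership, up to checking that the inverse is also easy, which it is.

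\textbf{Step 2: MSO can express the computation.} In $\Z/p$ with a generator $g$, which MSO can quantify as any nonzero element, the successor relation $x \mapsto x+g$ is first-order definable. This yields a linear (cyclic) order definable in MSO. We can also cut the cycle at $0$, giving a definable linear order of length $p$. An NP computation on input $1^p$ runs in time $p^c$. Its tableau has $p^c \times p^c$ cells, but MSO only quantifies over subsets of a $p$-element set. So we must shrink the computation to fit.

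\textbf{Step 3: fitting the tableau.} I would pad: replace $L$ by $\{1^{n^{k}} : 1^n \in L\}$, which is still tally and in NP $\setminus$ P. Then the NTM for the padded language runs in time roughly linear, or even sublinear in the appropriate sense, in its input length $p$. The remaining obstacle is that a linear-time tableau still has $p^2$ cells. To handle this, I would instead use a nondeterministic machine with small space: choose the padding so that the original machine uses time $T \le \sqrt{p}$. Then $T \times T$ cells fit in $p$ elements, indexed via a definable pairing. Pairing is itself not directly MSO-definable on a bare cycle. Instead I would use time $T \le \log p$-free encodings: encode the tableau as $T$ consecutive blocks of length $T$ along the cycle. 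Consistency between row $i$ and row $i+1$ then compares positions $x$ and $x+T$. The shift by $T$ is expressible by quantifying a set $B$ that marks block boundaries and asserting matching via an existentially guessed ``vertical'' bijection. This is where the main difficulty lies. I expect to handle it by guessing, with finitely many set variables, a periodic marking that lets one express ``$y = x + T$''. One way is to use a second set $D$ equal to the multiples of $T$ from $0$ and require $D$ to be closed under a definable step; this needs a subgroup when $T \mid p$. That suggests switching back to prime powers $p^a$, where subgroups exist and are MSO/FO-definable ($x \in p^jG$). This explains the paper's choice of prime-power cyclic groups: the subgroup chain gives definable coarse coordinates, making ``$y=x+T$'' definable. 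Finally, polynomial-time model checking of the fixed sentence on $G_n$, plus constructing $G_n$ in poly time, would decide $L$ in P, contradicting the assumption.
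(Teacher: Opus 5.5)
Your opening matches the paper: Book's theorem turns E $\neq$ NE into a tally language in NP $\setminus$ P, and the map $1^n \mapsto 1^{p_n}$ lets you assume every string has prime length. The proposal then stops exactly where the proof has to happen. Step 3 ends with ``I expect to handle it by guessing\ldots'', and the pivot to prime powers is a gesture, not a construction. You never say which group corresponds to input $1^p$, how the grid is defined in it, or how the sentence knows the input length.

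Your diagnosis of the obstacle is also off. In $\Z/p$ the shift $y=x+t$ for an existentially chosen $t$ is already atomic, so no subgroup is needed for that. The real problem in a bare $\Z/p$ is that nothing ties $t$, the number of rows, or the input length to $p$ unless you define arithmetic. Monadic sets also cannot guess a vertical bijection. Separately, $p^jG$ is not uniformly first-order definable as $\exists y\,(x=p^jy)$, since $p$ varies. It is MSO-definable because the subgroups form a chain: the subgroup of order $p^i$ is the unique subgroup with exactly $i$ proper subgroups.

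The missing construction, as in the paper, runs as follows.
\begin{enumerate}
\item Fix $M$ running in time $o(n^k)$ and work in $C_{p^{2k}}$. Its size is polynomial in $p$, and it determines $p$.
\item Existentially quantify three objects. First, a generator $h$ of $C_{p^k}$, i.e.\ an element of $C_{p^k}$ in no proper subgroup of it. Second, an element $t$ lying in no proper subgroup of $C_{p^{2k}}$. Third, the set $T$, defined as the minimal set containing $0$ with $x\in T \Rightarrow x+t\in T\cup C_{p^k}$; then $T=\{0,t,\dots,(p^k-1)t\}$.
\item Every element is uniquely $x+y$ with $x\in C_{p^k}$ and $y\in T$, which gives a $p^k\times p^k$ grid. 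Rows are the cosets of $C_{p^k}$. The right-neighbour of $x$ is $x+h$ when this is not in $T$. The up-neighbour of $x$ is $x+t$ when this is not in $C_{p^k}$.
\item The input length is fixed with no arithmetic, using $[C_{p^k}:C_{p^{k-1}}]=p$. Let $U_{in}$ be the minimal set containing $0$ with $x\in U_{in}\Rightarrow x+h\in U_{in}\cup C_{p^{k-1}}$; then $U_{in}=\{0,h,\dots,(p-1)h\}$.
\item A standard tableau encoding uses a set variable $U_1$ for the $1$-cells and requires $U_1\cap C_{p^k}=U_{in}$. This yields $\theta$ with $C_{p^{2k}}\models\theta$ iff $1^p\in L$, once $p^k$ exceeds the running time of $M$ on $1^p$.
\end{enumerate}
Without this, or an equivalent explicit grid-plus-input construction, your proposal does not yet prove the theorem.
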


This is proved by defining a grid and simulating a Turing machine for a unary language in NP$\bs$P.

The other result of the paper (Proposition \ref{prop:fogroup}) is that a natural method for trying to show first order intractability for the class of finite groups by suitably encoding the class of all graphs, or any intractable hereditary graph class, cannot work, by a simple counting argument. 

\subsection{Acknowledgments}
We thank Emil Je\v{r}\'abek for pointing out a simple MSO axiomatization of finite fields, and Stefan Ludwig for pointing out errors in a previous version.

\section{First order model checking}

In this section, we assume that all sentences are in first order logic. Fields are equipped with the language $\LL_{ring} = \set{+, \times, 0, 1}$ while groups are equipped with the language $\LL_{group} = \set{\cdot, e}$.

\subsection{Finite fields}

In this subsection, we show first order tractability for the class of finite fields, as a corollary of the following result of Ax. First order tractability for the class $\set{\Z/n\Z | n \in \N}$ then follows by an application of Feferman-Vaught. All sentences are assumed to be in $\LL_{ring}$.

\begin{definition}
	A \emph{reducibility sentence} is a boolean combination of sentences of the form $\exists x \phi(x)$ where $x$ is a singleton and $\phi$ is quantifier-free.
\end{definition}

\begin{fact}[{\cite[Theorem 12]{ax1968elementary}}] \label{fact:ax12}
	Let $\phi$ be a sentence. Then there exist a reducibility sentence $\psi$ and some $N_\phi \in \N$ such that for every prime power $n \geq N_\phi$, we have $\F_n \models \phi \leftrightarrow \psi$. Furthermore, both $\psi$ and $N_\phi$ are computable from $\phi$.
\end{fact}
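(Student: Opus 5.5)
This is Ax's theorem; I would reconstruct it in three stages, following the structure of Ax's original proof. The first stage is a model-theoretic analysis of pseudofinite fields, the second a compactness transfer to large finite fields, and the third a computability argument.

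\textbf{Stage 1: elementary invariants of pseudofinite fields.} Recall that a pseudofinite field is an infinite model of the theory of all finite fields. Ax's characterization says that $F$ is pseudofinite iff it is perfect, has exactly one extension of each degree (so $\mathrm{Gal}(F) \cong \hat{\Z}$), and is pseudo-algebraically closed (PAC). The PAC property comes from Lang--Weil: an absolutely irreducible variety has $\approx q^{\dim}$ points over $\F_q$ for $q$ large.

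The key elementary-equivalence criterion is then the following. Two pseudofinite fields $E, F$ satisfy $E \equiv F$ iff their absolute numbers $E \cap \overline{\mathbb{Q}}$ (or $E\cap \overline{\F_p}$) are isomorphic, in the same characteristic. One proves this by a back-and-forth between $\aleph_1$-saturated models, using PAC together with the Galois-group condition to embed regular extensions. The absolute numbers are determined by which monic one-variable polynomials $f \in \Z[x]$ have a root. In other words, the complete theory is determined by the truth values of sentences of the form $\exists x\, f(x)=0$ together with the characteristic, and these are reducibility sentences.

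\textbf{Stage 2: transfer to large finite fields.} Let $T_\phi$ be the theory of finite fields, with the additional axioms ``$|F| \geq n$'' for every $n$. By Stage 1, every model of $T_\phi$ has its theory determined by reducibility sentences. Consider the set $\Sigma$ of reducibility sentences. For each model $F$, some finite conjunction of elements of $\Sigma$ true in $F$ decides $\phi$. A standard compactness argument (Stone space) then yields a single reducibility sentence $\psi$ with $T \proves \phi \leftrightarrow \psi$, using that reducibility sentences are closed under Boolean combination. Another application of compactness gives finitely many axioms ``$|F|\ge n$'' that suffice, which yields $N_\phi$: for all prime powers $n \geq N_\phi$, $\F_n \models \phi\leftrightarrow\psi$.

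\textbf{Stage 3: computability.} This is the main obstacle. The theory $T$ is recursively axiomatizable, provided the PAC axioms are effectively expressible. Ax proves decidability of the theory of all finite fields, using effective Lang--Weil bounds so that the truth of a sentence in all sufficiently large finite fields is captured by an r.e. axiom scheme. Given this, I would enumerate pairs $(\psi, N)$ together with proofs from the recursive axioms of finite fields plus ``$|F|\ge N$'' of $\phi\leftrightarrow\psi$. Stage 2 guarantees that such a pair exists, so the search terminates, and it outputs a correct $\psi$ and $N_\phi$. The delicate point is verifying that the set of sentences true in all finite fields of size $\ge N$ is r.e. To do this I would use the axiomatization (perfect, unique extension of each degree, PAC) plus an effective Lang--Weil error bound, ensuring that every $\F_q$ with $q$ beyond an explicitly computable threshold satisfies each finite fragment. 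Alternatively, one could check small fields directly, since they are finitely many and each is decidable.
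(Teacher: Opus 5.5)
Your proposal is correct and follows the route the paper sketches in its remark after Theorem~\ref{thm:fomod}, which is Ax's proof. Existence of $\psi$ comes from the classification of completions of the theory of pseudofinite fields by characteristic and absolute numbers, plus compactness; computability comes from brute-force search over proofs in a recursively axiomatized theory. Your treatment of $N_\phi$ fills in a detail the paper leaves implicit: either search over proofs from the decidable theory of all finite fields together with $|F|\ge N$, or extract a threshold from effective Lang--Weil bounds for the finitely many PAC axioms used.
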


\begin{theorem} \label{thm:fomod}
	The class of finite fields has fixed-parameter tractable first order model checking. Furthermore, it can be done in time $f(|\phi|)k^3$ for some computable function $f$ if the field is given just as a binary string of length $k$ specifying the field's cardinality.
\end{theorem}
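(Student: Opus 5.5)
Given $\phi$ and the field (given as its cardinality $n=p^d$ in binary, $k$ bits), we first compute from $\phi$, via Fact \ref{fact:ax12}, the reducibility sentence $\psi$ and the threshold $N_\phi$. Both depend only on $\phi$, so this costs $f_1(|\phi|)$. We then compare $n$ with $N_\phi$.

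\textbf{Small fields.} If $n< N_\phi$, the field has size bounded in terms of $\phi$. We construct it explicitly: find $p$ and $d$ by trial, and find an irreducible polynomial of degree $d$ over $\F_p$ by brute force. We then evaluate $\phi$ naively in time about $n^{|\phi|}$. All of this is bounded by some computable $f_2(|\phi|)$.

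\textbf{Large fields.} If $n\ge N_\phi$, it suffices to evaluate $\psi$. Write $\psi$ as a boolean combination of sentences $\exists x\,\theta_i(x)$ with $\theta_i$ quantifier-free in one variable. Each atom of $\theta_i$ has the form $f(x)=g(x)$ for integer polynomials $f,g$ whose degree and coefficient sizes are bounded by a function of $|\phi|$. Let $q_i(x)$ be the product of the corresponding differences $f-g$. A solution of $\theta_i$ in $\F_n$ can then be described as follows: certain polynomials $h_1,\dots,h_r\in\Z[x]$ vanish at $x$ and certain others $g_1,\dots,g_s$ do not. After a disjunctive normal form expansion, which costs only a function of $|\phi|$, we may treat each disjunct separately.

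For a single disjunct, we reduce the $h_j$ mod $p$ and set $h=\gcd_j(h_j \bmod p)$; if $r=0$, take $h=x^n-x$. Let $g$ be the product of the $g_l \bmod p$. The disjunct has a solution in $\F_n$ iff there is a root of $\gcd(h, x^n-x)$ that is not a root of $g$. Equivalently, the degree of
\[
\gcd(h,x^n-x)\big/\gcd\bigl(h,x^n-x,g^{m}\bigr)
\]
is positive, where $m=\deg h$, which handles multiplicities. We compute $x^n \bmod h$ by repeated squaring in $\F_p[x]/(h)$. This takes $O(k)$ multiplications of polynomials of degree at most $m\le f(|\phi|)$ with coefficients mod $p$. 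Each such multiplication costs $f(|\phi|)\cdot O(k^2)$ bit operations, so the total is $f(|\phi|)k^3$. The case $r=0$ is exactly the question of whether some element of $\F_n$ avoids the roots of $g$. This holds whenever $n>\deg g$, and since $N_\phi$ can be chosen larger than $\deg g$, it holds automatically.

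\textbf{Finding $p$.} The main obstacle is obtaining $p$ from $n$ within the $k^3$ budget, since we are only given $n=p^d$. If the field is given as a table, $p$ is simply the additive order of $1$, and everything is easy. If only $n$ is given, we try each $d\le k$: compute an integer $d$-th root of $n$ by binary search or Newton's method and check it. Cheap integer-root algorithms keep this at about $O(k^3)$ overall, though the exact bound needs care; $p$ can be taken as the root for the largest $d$ that works.

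One more subtlety is that reduction mod $p$ of the integer coefficients (of size depending on $\phi$) is cheap. The Feferman--Vaught remark is not needed here.

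Finally, the first claim of Theorem \ref{thm:fomod}, fixed-parameter tractability on the class of finite fields given as structures, follows at once. From the multiplication tables we read off $n$ and $p$ in polynomial time and then run the procedure above.
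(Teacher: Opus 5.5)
Your proposal is correct and takes essentially the same route as the paper. It invokes Ax's result (Fact \ref{fact:ax12}), brute-forces the fields below $N_\phi$, and otherwise decides each existential in $\psi$ by computing $\gcd(h, x^n-x)$ over the prime field via repeated squaring (which is the algorithm the paper cites from Shoup), costing $O(k)$ polynomial multiplications at $O(k^2)$ bit operations each. You are also more explicit than the paper on two points it leaves implicit: handling inequations inside a single existential via the gcd quotient, and recovering $p$ from $n$ within the $k^3$ budget (the latter works out, e.g.\ with fast integer multiplication or near-linear-time perfect-power detection).
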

\begin{proof}
	Given a sentence $\phi$, we compute a reducibility sentence $\psi$ and an $N_\phi \in \N$ as given by Fact \ref{fact:ax12}. Now suppose we are given a field $\F_n$. We may suppose that $n \geq N_\phi$ and $n \geq |\psi|$, and so $n$ is larger than any coefficient or degree of any polynomial appearing in $\psi$, since otherwise we may brute-force check whether $\F_n \models \phi$. We can then check in time poly($n$) whether each polynomial appearing in $\psi$ has a root in $\F_n$. By, for example \cite[Theorem 20.1]{shoup2009computational}, we may determine if a monic polynomial in $\F_n[X]$ of degree at most $d$ has a root in $\F_n$ using $O(d^3 \log(n))$ field operations. By our assumption on the size of $n$, we may reduce the coefficients of every polynomial in $\psi$ to lie in the prime field of $\F_n$ in polynomial time, and so the fixed-parameter tractability result follows.
	
	For the finer result, as noted after \cite[Theorem 20.1]{shoup2009computational}, we may perform each field operation in time $O(\log^2(n))$, to obtain a total time of $O(\log^3(n))$.
\end{proof}

We briefly remark on the proof of Fact \ref{fact:ax12}, since it relevant for the complexity of the function $f$ in Theorem \ref{thm:fomod}. The proof goes through the theory $T_{pf}$ of pseudofinite fields, i.e. infinite models of the common theory of all finite fields, which \cite{ax1968elementary} shows is complete and recursively axiomatizable. Given $\phi$, it is first shown that there is a suitable $\psi$ such that $T_{pf} \proves \phi \leftrightarrow \psi$. Then the way that $\psi$ is computed is by a brute-force search over all proofs of the form $T_{pf} \proves \phi \leftrightarrow \theta$ for some reducibility sentence $\theta$, using that $T_{pf}$ is recursively axiomatizable. Thus we have no better upper bound on the function $f$ in Theorem \ref{thm:fomod} than that it is computable. However, it seems likely that the results of \cite{fried1976solving} could be used to show that $f$ can be taken primitive recursive.

In the case of hereditary classes of relational structures, it is conjectured that first-order model checking is fixed-parameter tractable on a class $\CC$ if and only if $\CC$ has the model-theoretic property monadic NIP, which coincides with the weaker property NIP in hereditary classes of relational structures \cite{braunfeld2022existential}. In \cite[Theorem 1.7]{dreier2024flip}, the implication that the failure of (monadic) NIP implies that first-order model checking is AW[$*$]-hard for hereditary graph classes is proved. Theorem \ref{thm:fomod} shows that the failure of NIP does not in imply the intractability of first-order model checking outside of the hereditary relational setting, since $T_{pf}$, and thus the class of finite fields, does not have NIP. For an example using a non-hereditary graph class, one can take the class of Paley graphs (see \cite{bollobas1981graphs}), which are definable in finite fields but whose limit theory is the Rado graph. (Tractability for Paley graphs can also be proved more directly by computing which extension axioms are needed to prove a given sentence or its negation, and then using the bound in \cite[Theorem 3]{bollobas1981graphs} on how large Paley graphs need to be for the extension axioms to hold.)

\subsection{Finite groups}

If a hereditary graph class $\CC$ is not monadically NIP (which conjecturally coincides with intractable first order model checking), then this is witnessed by the fact that the class of all graphs is definable in $\CC$ with only polynomial blow-up \cite[Theorem 1.8]{dreier2024flip}. Using a counting argument, we show there cannot be an analogous result giving intractability for the class of finite groups. More generally, any hereditary graph class definable in the class of finite groups with only polynomial blow-up must be monadically stable, and thus have tractable first order model checking \cite{dreier2024first}. We refer to \cite[\S 9.1]{siebertz2024advances} for the definitions of monadic stability and monadic NIP.

\begin{proposition} \label{prop:fogroup}
	Let $\CC$ be a hereditary graph class. If $\CC$ is definable in the class of finite groups and there is a polynomial $p(n)$ such that each graph $G$ is definable in a group of size at most $p(|G|)$, then $\CC$ is monadically stable.
\end{proposition}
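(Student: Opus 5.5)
Since $\CC$ is hereditary, I will argue by contrapositive: assume $\CC$ is not monadically stable and derive a counting contradiction. The result I will use is the known structure theory for hereditary graph classes that are not monadically stable. Such a class either fails to be monadically NIP, or it is monadically NIP but not monadically stable. In both cases $\CC$ has very many graphs. If $\CC$ is not monadically NIP, then by \cite[Theorem 1.8]{dreier2024flip} (or the underlying Ramsey-type characterization) it contains, up to a bounded set of flips, all graphs in some encoded form. It therefore has at least $2^{\Omega(n^2)}$ labelled graphs on $n$ vertices, or at least $2^{n^{1+\epsilon}}$ unlabelled graphs of size polynomial in $n$. If $\CC$ is monadically NIP but not monadically stable, then it transduces arbitrarily long half-graphs. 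By the characterization of monadically stable hereditary classes, it then contains, up to flips, all ladder/semi-ladder type structures. Encoding permutations or arbitrary orders into these gives at least $n^{cn}$, that is $2^{\Omega(n\log n)}$, pairwise non-isomorphic graphs of size polynomial in $n$. This second bound is the one that matters: I only need a superpolynomial-in-$n$ lower bound on the logarithm of the number of isomorphism types, namely growth faster than $2^{O(\log^2 n)}$.

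The matching upper bound comes from counting groups together with formula instances. Fix the finitely many formulas $\delta(x,\bar y)$ and $\eta(x,x',\bar y)$ witnessing the definability of $\CC$ in finite groups, allowing parameters $\bar y$ of fixed length $\ell$. Each $G\in\CC$ on $n$ vertices is then determined up to isomorphism by a triple: a group $H$ of order $m\le p(n)$, a parameter tuple from $H^\ell$, and the choice among the finitely many formula pairs. The key input is the classical bound on the number of groups of order $m$: it is $m^{O(\log^2 m)}$ (Pyber's theorem), and even the elementary bound of roughly $m^{\log_2 m}$ from generating sets of size at most $\log_2 m$ suffices. So the number of isomorphism types of graphs in $\CC$ with at most $n$ vertices is at most
\[
\sum_{m\le p(n)} m^{O(\log m)}\cdot m^{\ell}\cdot O(1) = 2^{O(\log^2 n)}.
\]
Here the elementary bound is proved by noting that every group of order $m$ is generated by at most $\log_2 m$ elements, and its multiplication table is determined by the images of these generators under the regular representation in $S_m$. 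This gives at most $(m!)^{\log_2 m}$ groups, which is too weak. I therefore plan to invoke Pyber's bound $m^{(2/27+o(1))\mu^2}$ with $\mu\le\log_2 m$, which gives $2^{O(\log^3 m)}$; this is still quasipolynomial in $n$.

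Comparing the two bounds finishes the proof: a quasipolynomial $2^{\mathrm{polylog}(n)}$ count of isomorphism types cannot accommodate $2^{\Omega(n\log n)}$ distinct graphs. Hence $\CC$ is monadically stable, and tractability then follows from \cite{dreier2024first}. The main obstacle is the lower bound for hereditary classes that are monadically NIP but not monadically stable. There I must make sure that the characterization indeed yields superpolynomially many unlabelled graphs, and not merely many labelled ones. I would handle this by extracting, via the Ramsey-type characterization, a fixed flip of either arbitrarily large semi-ladders or a definable linear order with arbitrary bipartite substructures. If only half-graphs are available, each of which alone contributes just one isomorphism type per size, I would instead use heredity and encode subsets: the induced subgraphs of a large half-graph-like configuration with marked pieces give exponentially many types.
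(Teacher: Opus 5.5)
Your overall strategy matches the paper's. You play a quasipolynomial upper bound on isomorphism types in $\CC$, coming from the enumeration of finite groups, against a larger lower bound for hereditary classes that are not monadically stable. The upper half is fine once you use the $m^{O(\log^2 m)}$ bound for groups of order $m$ \cite{mciver1987enumerating} consistently. That gives $2^{O(\log^3 n)}$ types on at most $n$ vertices. Your displayed $m^{O(\log m)}$ and the remark that the elementary bound suffices are wrong, as you then notice. Consequently, the lower bound must show that the logarithm of the number of types eventually exceeds $C\log^3 n$ for every $C$. Linear growth of the logarithm is plenty; a ``superpolynomial'' logarithm is not what is needed.

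The gap is in the lower bound, which is where all the content lies, and neither case is established as written.

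\emph{Monadically NIP case.} Saying that $\CC$ transduces long half-graphs tells you nothing by itself about isomorphism types inside $\CC$, since transductions add colours. Nor is there in general any encoding of permutations of $[n]$ into graphs of size $O(n)$. Threshold graphs form a hereditary, monadically NIP, non-monadically-stable class with only $2^{N-1}$ types on $N$ vertices. The missing ingredient is that a monadically NIP class that is not monadically stable has an unstable edge relation \cite[Theorem 1.3]{nevsetvril2021rankwidth}. So $\CC$ itself contains arbitrarily large semi-induced half-graphs. By Ramsey you may take each side to be a clique or an independent set, and heredity then gives exponentially many types (cf.\ \cite{pouzet2006profile}). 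This is your fallback, now justified; no flips or markings are needed.

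\emph{Non-monadically-NIP case.} You assert the count rather than derive it from ``$\CC$ contains flipped encodings of all graphs''. The labelled bound $2^{\Omega(n^2)}$ is false in general: the hereditary closure of $1$-subdivided graphs is not monadically NIP, yet has only $2^{O(n\log n)}$ labelled members on $n$ vertices. The unlabelled alternative does not follow naively either. Dividing by the roughly $k^N$ ways to $k$-flip an $N$-vertex graph wipes out a $2^{\Theta(n^2)}$ count when $N\approx n^2$. The paper avoids this by citing \cite[Theorem 1.3]{braunfeld2022existential}. That result directly says a hereditary class that is not monadically NIP has too many isomorphism types to fit under $n^{O(\log^2 n)}$.
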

\begin{proof}
	Given a class of structures $\DD$, let $\DD_n$ denote the number of structures in $\DD$ of size at most $n$, up to isomorphism. Let $\GG$ be the class of finite groups, and $\CC$ be a hereditary class of finite graphs definable in the class of finite groups, with a polynomial $p(n)$ as in the statement. By \cite[Theorem 1]{mciver1987enumerating}, $\GG_n \leq n^{O(\log^2(n))}$. If the definition of $\CC$ uses a formula $Dom(\xbar)$ for the domain of the graph, then $\CC_n \leq \GG_{p(n)^{|\xbar|}} \leq n^{O(\log^2(n))}$. By \cite[Theorem 1.3]{braunfeld2022existential}, $\CC$ must be monadically NIP. Letting $\HH$ be the hereditary closure of the class of half-graphs, we have $\HH_n \geq 2^{n-2}$ (e.g., see \cite[Examples 14(7)]{pouzet2006profile}). So $\CC$ must also be edge-stable, and thus monadically stable by \cite[Theorem 1.3]{nevsetvril2021rankwidth}.
\end{proof}

\section{MSO model checking in cyclic groups of prime power order}
In this section, we assume all sentences are in MSO. Because the groups we analyze are cyclic, we will use additive notation and equip them with the language $\LL_{cyc} = \set{+, 0}$. We show the MSO intractability of cyclic groups of prime power order.

\begin{definition}
The complexity class E (resp. NE) is the class of decision problems that can be solved by a deterministic (resp. non-deterministic) Turing machine in time $2^{O(n)}$.
\end{definition}

\begin{definition}
	Let $\CC$ be a class of structures, and let $\LL$ be a logic. We say that $\LL$-model checking is in XP (or is \emph{slicewise polynomial}) if there is an algorithm that runs in time $|M|^{O(f(|\phi|))}$ for some computable function $f$, for every $M \in \CC$.
\end{definition}

\begin{lemma} \label{lem:unary_prime}
	Let $L \subset \set{1}^*$ be a unary language. There is a unary language $L'$ containing only strings of prime length such that $L$ is polynomial-time equivalent to $L'$.
\end{lemma}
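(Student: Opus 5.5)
Let $p_m$ denote the $m$-th prime. We define
\[ L' = \set{1^{p_m} : 1^m \in L}, \]
so that $L'$ consists only of strings of prime length, and then give polynomial-time reductions in both directions. The whole argument rests on two facts. First, $p_m$ is polynomially bounded in $m$: by the prime number theorem, or just by Chebyshev's bound, $p_m = O(m \log m)$. Second, the primes up to a bound $B$ can be listed in time polynomial in $B$, for example by the sieve of Eratosthenes or by trial division. Since the inputs are in unary, "polynomial time" always means polynomial in the length of the input string, and both facts fit within that budget.

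\emph{Reducing $L$ to $L'$.} Given $1^m$, we list primes by trial division until we reach the $m$-th one. This takes time polynomial in $p_m = O(m\log m)$, hence polynomial in $m$. We then output $1^{p_m}$. By the definition of $L'$, we have $1^m \in L$ if and only if $1^{p_m} \in L'$, and the output has polynomial length.

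\emph{Reducing $L'$ to $L$.} Given $1^n$, we first test whether $n$ is prime by trial division, in time poly$(n)$. If $n$ is not prime, we output a fixed string that is not in $L$. If no such string exists, then $L = \set{1}^*$ is trivial and the case is handled separately. If $n$ is prime, we count the primes up to $n$ to find $m$ with $n = p_m$, and output $1^m$. Then $1^n \in L'$ if and only if $1^m \in L$, and since $m \le n$ the output is short.

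Composing these maps gives the polynomial-time equivalence. If "equivalent" is read as Turing reductions, the degenerate case needs no special handling, because the oracle machine can simply answer "no" itself on non-prime inputs.

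The only step that is not routine is the bound $p_m \le \mathrm{poly}(m)$. It is needed so that the forward reduction does not blow up the input length. Bertrand's postulate already gives $p_m \le 2^m$, which is too weak here. So we invoke the prime number theorem, or the elementary Chebyshev estimate $\pi(x) \ge c\,x/\log x$, to obtain $p_m = O(m\log m)$. With that bound in hand, every other step is direct counting.
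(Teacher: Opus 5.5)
Your proof is correct and matches the paper's. Both use the same $L' = \{1^{p_m} : 1^m \in L\}$, bound $p_m$ polynomially in $m$ via the prime number theorem (your Chebyshev estimate suffices), and compute both directions by enumerating primes with a primality test. The paper's reductions are oracle-style (``rejecting if there is no such $k$''), so your Turing-reduction reading is the intended one, and the degenerate case $L = \{1\}^*$ needs no separate treatment.
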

\begin{proof}
	Let $p_n$ denote the $n^{th}$ prime. By the prime number theorem, $p_n = (1+o(1))n\ln n$. Given $L$, let $L' = \set{1^{p_n} | 1^n \in L}$. 
	
	We first reduce $L$ to $L'$. To decide if $1^n \in L$, we may compute $p_n$ in polynomial time by iterating through $\N$ and running a primality test on each number, and then check whether $1^{p_n} \in L'$. 
	
	For the other direction, to decide if $1^n \in L'$, we compute $k$ such that $n = p_k$ (rejecting if there is no such $k$), again by iterating through $\N$ and running a primality test on each number. Then we check whether $1^k \in L$.
\end{proof}

\begin{theorem} \label{thm:msocyc}
	Assume E $\neq$ NE. Then MSO model checking on the class of cyclic groups of prime-power order is not in XP.
\end{theorem}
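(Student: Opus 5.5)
We will derive the theorem from the standard fact that E $\neq$ NE holds if and only if there is a tally (unary) language in NP$\bs$P. Fix such a unary language $L \in$ NP$\bs$P. By Lemma \ref{lem:unary_prime}, we may replace $L$ by a polynomial-time equivalent unary language $L'$ consisting only of strings of prime length; then $L' \in$ NP$\bs$P as well. Let $M$ be a nondeterministic Turing machine deciding $L'$ in time $p^c$ on input $1^p$, for some constant $c$. The plan is to find a single MSO sentence $\phi$ (depending on $M$ and $c$) together with a polynomial-time map $p \mapsto G_p$, where $G_p$ is a cyclic group of order $p^{m}$ for a fixed $m$, such that $1^p \in L'$ iff $G_p \models \phi$. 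Since $|G_p|$ is polynomial in $p$, a model-checking algorithm running in time $|G|^{O(f(|\phi|))}$ would then decide $L'$ in polynomial time for this fixed $\phi$, which is a contradiction.

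The core of the argument is an MSO interpretation of a $p^{k}\times p^{k}$ grid inside $\Z/p^m\Z$, with enough room to hold a computation tableau of $M$. The subgroup chain $\Z/p^m\Z \supset p\Z/p^m\Z \supset \cdots \supset 0$ is definable: $pG = \set{px}$ is first-order definable once $p$ itself is available. We will first make $p$ and the subgroup $H = p^{m-1}G \cong \Z/p\Z$ definable. The order-$p$ subgroup is the unique minimal nontrivial subgroup, definable in MSO as the minimal nonzero set closed under $+$. A linear order of length $p$ is obtained by fixing a generator $g$ of $H$ (which we quantify over as a singleton) and taking the successor relation $x \mapsto x+g$; starting from $0$, MSO can express reachability, so this yields a successor structure on $H$ that is a path of length $p$ after cutting at $0$. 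Next, we will encode pairs. Elements of $G$ can be written as $p$-adic expansions, and we want coordinates $(a,b)$ with $a,b$ ranging over a set of size $p^{j}$. To obtain the grid, we use the quotient maps: $x \mapsto p^{i}x$ gives definable projections between layers, so $G/p^{j}G$ and $p^{m-j}G$ are both definably copies of $\Z/p^j\Z$. An element of $\Z/p^{2j}\Z$ determines its residue mod $p^j$ (the row) and its quotient digit (the column) via $p^{j}x$, which together give a bijection with $(\Z/p^j)^2$. Horizontal and vertical successors then become $x\mapsto x+1$-type steps on the appropriate layer, and these are definable once we fix a generator and use MSO-definable orders as above.

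With the grid of side $p^{j}\geq p^{c}$, where we take $m = 2j$ and $j = c$, the sentence $\phi$ existentially quantifies over finitely many monadic sets coloring grid cells with tape symbols, head positions, states, and nondeterministic choices. It then asserts first-order local consistency of each row with the next via $M$'s transition relation, the initial row encoding the input $1^p$ (its length $p$ being read from the definable copy of $\Z/p\Z$ embedded in the first row), and an accepting state somewhere. This is the standard Büchi–Trakhtenbrot tableau encoding.

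The main obstacle will be the definability step: making the grid's two successor relations and its origin truly MSO definable and uniform in $p$, and in particular making the input length $p$ visible, meaning that the first row must carry exactly $p$ ones followed by blanks. I will first try to handle this by using the unique subgroup of order $p$ to mark input cells, and the layered structure of $\Z/p^m\Z$ to avoid needing any multiplication.
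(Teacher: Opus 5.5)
Your outer reduction is the paper's: Book's tally characterization of E $\neq$ NE, Lemma \ref{lem:unary_prime}, one fixed sentence evaluated on $C_{p^{2k}}$ (the paper's $k$ is your $j$), a $p^k\times p^k$ grid whose rows are the cosets of the subgroup $C_{p^k}$ of order $p^k$, and the input in the bottom row. But the step that carries all the content, defining the grid in MSO, is wrong as stated and then left open.

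In $\Z/p^{2j}\Z$ we have $p^jx=p^j(x \bmod p^j)$. So $x\mapsto p^jx$ has kernel exactly $C_{p^j}$: it recovers the row index (the residue mod $p^j$) a second time and gives no ``quotient digit''. No pair of definable homomorphisms can give your bijection with $(\Z/p^j\Z)^2$, since that would be a group isomorphism between the cyclic group $\Z/p^{2j}\Z$ and the non-cyclic $(\Z/p^j\Z)^2$. Also, $G/p^jG$ consists of cosets, not cells.

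The missing idea is a set-quantified, non-canonical transversal that gives a consistent origin in every row. The paper quantifies a generator $t$ of $G$ and the minimal set $T$ containing $0$ such that $x'\in T$ implies $x'+t\in T\cup C_{p^k}$; this forces $T=\set{0,t,\dots,(p^k-1)t}$. Every element is then uniquely $a+y$ with $a\in C_{p^k}$ and $y\in T$. The cyclic wrap-arounds are cut by requiring the right-neighbor $x+h$ (for a generator $h$ of $C_{p^k}$) to lie outside $T$, and the up-neighbor $x+t$ to lie outside $C_{p^k}$. Your ``$x\mapsto x+1$-type steps on the appropriate layer'' does not address this alignment, which is exactly the obstacle you flag at the end.

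Your input marking also fails. If the bottom row $C_{p^j}$ is ordered as $0,h,2h,\dots$ for any generator $h$, the subgroup of order $p$ sits at positions $0,p^{j-1},2p^{j-1},\dots$. So for $j\ge 2$, marking it does not write $1^p$ followed by blanks on the initial tape. The paper instead uses $U_{in}=\set{0,h,\dots,(p-1)h}$, the minimal set containing $0$ with $x\in U_{in}\Rightarrow x+h\in U_{in}\cup C_{p^{k-1}}$. This works because $ih\in C_{p^{k-1}}$ iff $p\mid i$.

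A minor point: $pG$ is not first-order definable in $\LL_{cyc}$ uniformly in $p$, since the term $x+\dots+x$ would need $p$ summands. However, the subgroups $C_{p^i}$ are MSO-definable as the members of the subgroup chain, which is all you need.
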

\begin{proof}
	By \cite[Theorem 1]{book1974tally}, E $\neq$ NE is equivalent to the existence of a unary language $L \subset \set{1}^*$ in NP that is not in P. By Lemma \ref{lem:unary_prime}, we may assume $L$ contains only strings of prime length. Fix a non-deterministic Turing machine $M$ for $L$ and a $k \in \N$ such that $M$ runs in time $o(n^k)$. To prove the statement, it suffices to produce an $\LL_{cyc}$-sentence $\theta$ such that $C_{p^{2k}} \models \theta$ if and only if $1^p \in L$, for $p$ sufficiently large so that $p^k$ is larger than runtime of $M$ on $1^p$. The idea is that $\theta$ will define a $p^k \times p^k$ grid, and by quantifying over set-variables corresponding to an assignment of either 1 or blank to each vertex, will check whether some assignment corresponds to an accepting run of $M$ on input $1^p$.
	
	We begin by defining a $p^k \times p^k$ grid in $C_{p^{2k}}$. First, note that for each $i \in [2^k]$, the subgroup of order $p^i$ is definable, by a formula saying $X$ is a subgroup with exactly $i$ proper subgroups, so we may assume we have a predicate $C_{p^i}$ for each subgroup. We first write a formula $Gen_{p^k}(x)$ defining that $x$ is a generator for $C_{p^k}$, by saying that $x \in C_{p^k}$ and such that there is no proper subgroup of $C_{p^k}$ containing $x$. We next give a formula $Rep_{p^k}(x, X)$ defining another set $T$ and ``generator'' $t$ such that $\set{0, t, 2t, \dots, (p^k-1)t}$ is a transversal of the cosets of $C_{p^k}$.
		Let $Rep_{p^k}(x, X)$ say that no proper subgroup contains $x$ and that $X$ is a minimal set such that $0 \in X$ and if $x' \in X$ then $x' + x \in X \cup C_{p^k}$. Suppose $h$, $t$, and $T$ are such that $C_{p^{2k}} \models Gen_{p^k}(h) \wedge Rep_{p^k}(t, T)$. Then for every $g \in C_{p^{2k}}$, there are unique $x \in C_{p^k}, y \in T$ such that $g = x+y$, so we can identify $C_{p^{2k}}$ with $C_{p^k} \times T$. To define the (directed) edges of the grid, we let $y$ be the right-neighbor of $x$ if $y \not\in T$ and $y = x+h$, and we let $y$ be the up-neighbor of $x$ if $y \not\in C_{p^k}$ and $y = x+t$. So each row of the grid is identified with a coset of $C_{p^k}$, with the bottom row identified with $C_{p^k}$ itself, while each column gives a transversal of the cosets with the left column identified with $T$. We also define the set $U_{in} = \set{0, h, \dots, (p-1)h}$ consisting of the first $p$ elements of the bottom row of the grid by the formula saying that $U_{in}$ is the minimal set containing $0$ and such that if $x \in U_{in}$ then $x+h \in U_{in} \cup C_{p^{k-1}}$.

After existentially quantifying over the existence of $h$, $t$, and $T$ such that $C_{p^{2k}} \models Gen_{p^k}(h) \wedge Rep_{p^k}(t, T)$, it is standard that we may write a sentence $\theta$ involving a unary predicate $U_1$ so that $C_{p^{2k}} \models \theta$ if and only if, interpreting each row of the grid as the tape of a Turing machine, each vertical step in the grid as a time step of the machine, and $U_1$ as the 1-cells of the tapes, the expansions of the grid by $U_1$ encodes an accepting run of the machine $M$ (e.g., see \cite[Lemma 4.5]{kreutzer2012parameterized}); by also requiring that $U_1 \cap C_{p^k} = U_{in}$, we enforce that the input tape contains $1^p$. Since we have included enough space in the grid for an accepting computation, we will have $C_{p^{2k}} \models \theta$ if and only if $1^p \in L$.
\end{proof}

As noted after Theorem \ref{thm:fomod}, pseudofinite fields, i.e. infinite models of the common first order theory of finite fields, are the key notion for that result. Thus it seems natural to consider the MSO analogue of this notion. However, the class of finite fields is finitely axiomatizable in MSO via the field axioms and the fact that the multiplicative group is cyclic, and so there are no pseudofinite fields in this setting.

 \bibliographystyle{alpha}
\bibliography{Bib}

\end{document}